\newcommand{\corr}{(\Letter)}
\def\BibTeX{{\rm B\kern-.05em{\sc i\kern-.025em b}\kern-.08em
 T\kern-.1667em\lower.7ex\hbox{E}\kern-.125emX}}
\newtheorem{thm}{Theorem}
\newtheorem{lem}{Lemma}
\newcommand{\Rmnum}[1]{\expandafter\@slowromancap\romannumeral #1@}
\begin{document}

\title{Voronoi Diagram Encoded Hashing}

\titlerunning{Voronoi Diagram Encoded Hashing}

\author{Yang Xu \and Kai Ming Ting \corr}
\authorrunning{Y.Xu and K.M.Ting}
\institute{National Key Laboratory for Novel Software Technology, Nanjing University, Nanjing 210023, China \\ \email{xuyang@lamda.nju.edu.cn, tingkm@nju.edu.cn}}

\maketitle              

\begin{abstract}
The goal of learning to hash (L2H) is to derive data-dependent hash functions from a given data distribution in order to map data from the input space to a binary coding space. 
Despite the success of L2H, two observations have
cast doubt on the source of the power of L2H, i.e., learning. First,  a recent study shows that even using a version of locality sensitive hashing functions without learning achieves binary representations that have comparable accuracy
as those of L2H, but with less time cost. Second,  existing L2H methods are constrained to three types of hash functions: thresholding, hyperspheres, and hyperplanes only. In this paper, we unveil the potential of Voronoi diagrams in hashing. 
Voronoi diagram is a suitable candidate because of its three properties. This discovery has led us to propose a simple and efficient no-learning binary hashing method, called Voronoi Diagram Encoded Hashing (VDeH), which constructs a set of hash functions through a data-dependent similarity measure and produces independent binary bits through encoded hashing. We demonstrate through experiments on several benchmark datasets that VDeH achieves superior performance and lower computational cost compared to existing state-of-the-art methods under the same bit length.

\keywords{Learning to hash  \and Binary representation \and Voronoi diagram }
\end{abstract}

\section{Introduction}

For decades, hashing techniques have been one of the most effective tools commonly used to compress data for fast access, analysis, and learning~\cite{LSH-similarity-estimation-2002,chi2017hashingsurvey,zhu2023multi}. Hashing techniques are popular for their simplicity and offer significant advantages in data processing and security~\cite{al2020secure,hu2021persistent}. Their primary strength lies in the ability to efficiently map arbitrary-sized input data to fixed-size embeddings, known as hash values. This process is deterministic, meaning the same input always yields the same output, ensuring data consistency. These attributes facilitate a wide range of applications, including data integrity verification, data indexing, blockchain technology, and distributed systems~\cite{borthwick2021scalable,mao2022hash,min2024sephash,thangavel2019enabling}.

Learning to hash (L2H)~\cite{rcLSH23,Wang-02-survey,CBE14}, a prominent subfield within hashing techniques, aims to map data from the input space to a binary coding space, in which the Hamming distance well approximates the distance (e.g., $\ell_p$ norm) in input space. Then, efficient retrieval and learning can be performed in the binary coding space.  The general problem in L2H is to learn long binary codes that approximate the input distance\footnote{Note that another related area called \textit{deep hashing}~\cite{luo2023survey,singh2022learningsurvey} conducts a similar task but with a totally different goal: generating short compact binary codes, where proximate or identical binary codes represent semantic (categorical) similarity between instances, while distant binary codes represent semantic dissimilarity. However, deep hashing is specifically designed for datasets with explicit categorical information, notably image data~\cite{Hansen2020UnsupervisedSHDeep,he2024one,Wu2022OnlineEnhancedDeep}, and is incompatible with datasets lacking this information, including extensive text and tabular data. Besides, these short binary codes do not adequately approximate the input distance. Existing studies demonstrate that achieving an effective approximation of the input distance through binary codes requires a code length of $\mathcal{O}(d)$, where $d$ is the input dimensionality~\cite{BPH13,rcLSH23,SP15,CBE14}.}.

Given input data of size $N$ with dimensionality $d$, the core task of L2H is to construct binary hash functions that map the data into $L$-bit binary codes. Broadly, existing L2H methods can be categorized into three classes: thresholding~\cite{BPH13,ITQ13,SHnips08,SP15,CBE14}, hyperspheres~\cite{heo2015spherical}, and hyperplanes~\cite{LSH21,DSH14,rcLSH23,SELVE14}. Thresholding-based methods initiate the process by transforming distance or similarity relationships from the input space into a new matrix of size $N\times L$ through metric learning~\cite{SHnips08,CBE14} or PCA-based methods~\cite{BPH13,ITQ13,SP15}. Then, matrix values are binarized based on a given threshold. Hypersphere and hyperplane-based methods construct hash functions by partitioning the input space. Points falling within the same partition are mapped to identical binary code values. Due to randomness in the partitioning process, these methods typically improve binary codes performance by learning the partitioning strategy to achieve a uniform distribution of points across distinct partitions~\cite{heo2015spherical,rcLSH23}.

Despite the success of L2H, 
the following two observations cast doubt on the source of the power of methods, i.e., learning:

\begin{enumerate}
    \item Existing L2H methods all claim that their learning process (in optimizing some defined objective) plays a crucial role in obtaining effective binary codes. Yet, a recent study \cite{LSH21} shows that even using brLSH, a version of locality-sensitive hashing~\cite{LSH-similarity-estimation-2002} without learning, achieves binary codes that have comparable performance as those of L2H methods, but with less computation cost. 
   \item They are constrained to three types of hash functions. Learning is required in order to ensure that these functions satisfy three key properties for effective binary codes~\cite{ITQ13,heo2015spherical,DSH14,rcLSH23,SHnips08,SP15}, i.e., (a) \textit{full space coverage}: each hash function covers the entire space; (b) \textit{entropy maximization}: each hash function covers approximate the same number of points from the input data, such that the total set of hash functions maximizes information entropy; and (c) \textit{bit independence}: all hash functions have mutually independent hash bits. Their importance will be discussed in detail in Section \textbf{3}.
\end{enumerate}

In this work, we provide the insight that Voronoi diagram is a suitable alternative to L2H, attributed to its three properties: first, a Voronoi diagram naturally covers the entire input space; second, for a given dataset, every Voronoi cell in a Voronoi diagram covers approximately the same number of points \cite{Devroye2017OnTM}; third, it is feasible to transform a Voronoi diagram into a set of hash functions, which generates mutually independent bits (hash values). Unlike existing L2H methods, all these properties are obtained in Voronoi diagrams without any extra learning process.

Voronoi diagrams facilitate the generation of \emph{data-dependent} hash functions, as these hash functions are derived by samples from a given dataset and are dependent on the distribution of the data. Specifically, they yield small Voronoi cells in areas of high densities, while large Voronoi cells are produced in areas of low densities \cite{Devroye2017OnTM,ting2018isolation}. 
This discovery has led us to propose a simple and effective no-learning approach called Voronoi Diagram Encoded Hashing (VDeH), which is the first data-dependent binary hashing scheme based on Voronoi diagram already used in a kernel. 
Figure \ref{fig:Voronoi} shows an illustrative comparison of VDeH using Voronoi diagram and three existing types of hash functions.

\begin{figure*}[t!]
    \centering
    \includegraphics[width=0.93\textwidth]{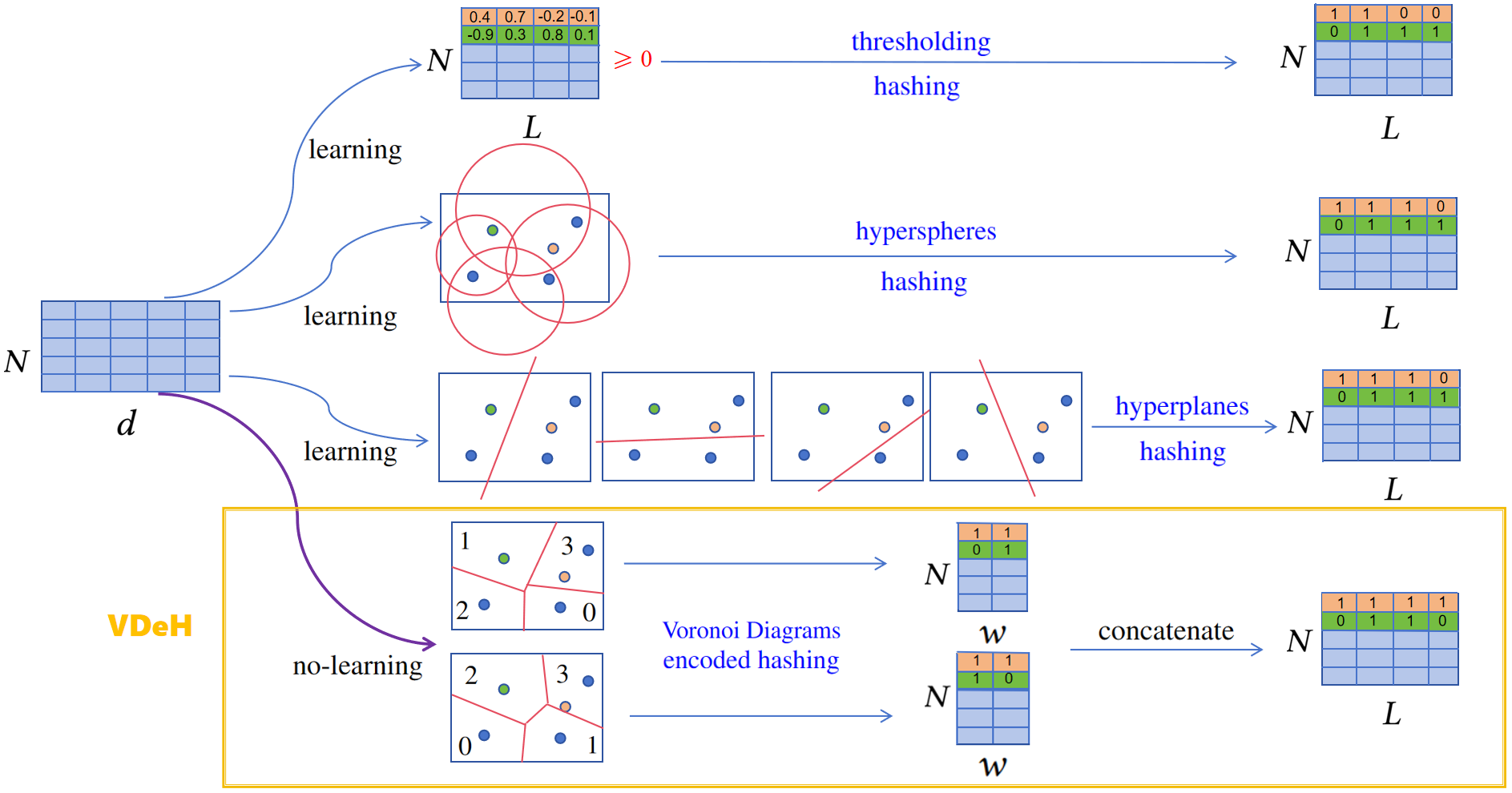}
    \caption{An illustrative comparison of VDeH using Voronoi diagram and three existing types of hash functions (thresholding, hyperspheres and hyperplanes). The parameters $N=5$, $d=5$, $L=4$, and $w=2$ are used in this example. 
    }
    \label{fig:Voronoi}
\end{figure*}

Our main contributions are summarized as follows:

\begin{itemize}
    \item Providing the insight that Voronoi diagram can be leveraged to realize a family of data-dependent hash functions, without learning. 
    \item Creating a new definition of hashing scheme derived directly from  a kernel based on Voronoi diagrams.
    \item Proposing VDeH, which creates the first kernel-based implementation of a data-dependent binary hashing scheme, and formulating a binary distance function intrinsic to VDeH. We have theoretically proved that VDeH is capable of generating mutually independent binary bits.
    \item Conducting comprehensive empirical comparisons and analyses between VDeH and existing state-of-the-art L2H methods to demonstrate the effectiveness and efficiency of VDeH.
\end{itemize}

\section{Background and Related Work}

With the emergence of large-scale data,  numerous learning to hash (L2H) methods have been developed~\cite{haq2021survey,rcLSH23,Wang-02-survey}. The generic binary hashing problem is the following. Given $N$ data points $\mathcal{X}=[\mathbf{x}_1,\dots,\mathbf{x}_N]\in\mathbb{R}^{d\times N}$, generate $L$ hash functions to map a data point $\mathbf{x}$ into a $L$-bit binary hash code
$\mathcal{B}(\mathbf{x})=[h_1(\mathbf{x}),h_2(\mathbf{x}),\dots,h_L(\mathbf{x})]$
where $h_l(\mathbf{x})\in\{0,1\}$ is the $l$-th hash function. 

For the
linear binary projection-based hashing~\cite{BPH13,rcLSH23,CBE14}:
$$h_l(\mathbf{x})= sgn(F(\mathbf{R}_l\mathbf{x}+t_l)),$$
where $\mathbf{R}_l\in\mathbb{R}^{L\times d}$ is the projection matrix, $sgn(\cdot)$ is a binary map, and $t_l$ is the intercept. Different hashing methods aim at finding different $F$, $\mathbf{R}_l$ and $t_l$ with respect to different objective functions.

Existing L2H methods seek to ensure that the Hamming distance of generated binary codes closely approximates the input distance by imposing specific constraints on these codes. For example, given $b_i \in \{0, 1\}^L$, Spectral Hashing (SH)~\cite{SHnips08} aims to minimize the average Hamming distance between similar neighbors, represented by $\sum_{ij} W_{ij} ||b_i - b_j||^2$, where $W_{ij}$ is the similarity between points $i$ and $j$ as measured by Gaussian kernel, and $||b_i - b_j||^2$ is the Hamming distance between binary codes $b_i$ and $b_j$. SH uses two constraints: (i) 
$\sum_i b_i = \frac{1}{2} $ to ensure entropy maximization, and (ii) $\frac{1}{n} \sum_i b_i b_i^T = I$ to guarantee bit independence.  
In addition, Spherical Hashing (SpH)~\cite{heo2015spherical} generates hash functions by partitioning the input space using hyperspheres. SpH achieves bit independence by ensuring that each hashing function has an equal probability of output 0 and 1. To achieve entropy maximization, it requires each hypersphere to contain an approximately equal number of points. Since hyperspheres do not cover the entire input space, SpH increases the radius of the hyperspheres to cover at least all training data, thus ensuring full space coverage. 
It is worth noting that most existing methods explicitly or implicitly ensure the three key properties: full space coverage, entropy maximization, and bit independence by constructing similar constraints. Methods operated on spatial partitioning necessitate that the hash functions provide complete coverage of the input space~\cite{LSH21,heo2015spherical,DSH14,rcLSH23,SELVE14}. Besides, existing methods ensure entropy maximization by explicitly constraining each bit to have a half probability of being 0 or 1, while maintaining the independence of each bit's output~\cite{BPH13,ITQ13,heo2015spherical,DSH14,rcLSH23,SHnips08,SP15,CBE14,SELVE14}. The detailed explanations of how these methods guarantee these properties and their categorizations can be found in comprehensive survey papers~\cite{luo2023survey,Wang-02-survey}.

The three properties have been demonstrated to be crucial for effective binary codes in existing methods. However, these methods invariably require a learning process to gain these properties, as their employed hash functions, including those based on thresholding, hyperspheres, and hyperplanes, do not naturally possess these properties.

\section{Insight: Voronoi Diagram is a Suitable Candidate for Binary Hashing}
\label{sec-insight}

The key notations used in this paper are provided in Table \ref{notations}.

\begin{table}[t!]
\renewcommand{\arraystretch}{1}
\centering
\caption{Key notations used.}
\label{notations}
\begin{tabular}{@{}c|l@{}}
\hline
Notation & Definition \\ \hline
$\mathcal{X}$ & Input dataset with $N$ points in $\mathbb{R}^d$ \\
$\mathcal{D}$ & A set of $\psi$ points randomly sampled from $\mathcal{X}$ to generate a Voronoi diagram \\
$\mathcal{B}$ & $L$-bit binary codes correspond to all points in $\mathcal{X}$ \\
$H$ & A set of hash functions derived from a Voronoi Daigram \\
$\kappa$ & kernel $\kappa(x,y)$, where $x,y \in \mathbb{R}^d$\\
$\mathbf{S}$ & A similarity function defined in $\mathbb{R}^d$ \\
$w$ & The number of bits generated by encoded hashing over a Voronoi diagram \\
$\mathcal{H}$ & A hashing scheme consisting of a family of hash functions. \\
$P_\mathcal{H}$ & Distribution over a family of hash functions $\mathcal{H}$ \\
\hline
\end{tabular}
\end{table}

Given a point $\mathbf{x} \in \mathbb{R}^d$, a hashing scheme of a family $H$ of hash functions $h \in H$, where $h(\mathbf{x}) \in \{0,1\}$, must have the following three properties: full space coverage, entropy maximization, and bit independence. 

\textbf{Full space coverage.} All $h \in H$ cover the entire $\mathbb{R}^d$, i.e.,
$\forall \mathbf{x} \in \mathbb{R}^d, \exists h \in H \mbox{ s.t. } h(\mathbf{x}) = 1.$
The failure to achieve this coverage can severely impair retrieval effectiveness. Two possible current treatments of this shortcoming are unsatisfactory: (a) All points outside the covered regions have no binary code representations, rendering them irretrievable. (b) Using a common binary mapping to all points outside the covered regions risks conflating vastly distant points in the input space with a same binary code. This issue is particularly pronounced when adopting hypersphere-based hashing strategies \cite{heo2015spherical}. 
Therefore, ensuring that hash functions cover the entire space is  a critical aspect of maintaining high retrieval accuracy  in binary hashing-based information retrieval systems. This coverage guarantees that every point $\mathbf{x} \in \mathbb{R}^d$ can be effectively indexed and retrieved, thereby maximizing the utility of binary hashing in real-world applications.

\textbf{Entropy maximization.} For a given dataset $\mathcal{X} \subset \mathbb{R}^d$ with $N$ points, every hash function $h \in H$ shall cover approximately the same number of points $x \in \mathcal{X}$, i.e., uniformly distributed over all hash functions:
$$ \forall h \in H, \left| \textstyle|h(\mathcal{X})| - \frac{N}{|H|} \right| \leq \varepsilon ,$$
where $|H|$ denotes the total number of hash functions in $H$ and $\varepsilon$ is a predefined non-negative small number indicating the tolerance level for the approximation. This is to ensure that there are no under-utilized hash functions. This uniform coverage can be understood as maximizing the information entropy of the resulting binary codes \cite{Li2021DeepUI}, where high information entropy is indicative of a rich, diverse representation of the dataset.
It avoids the scenario where a large proportion of the dataset is lumped in a few binary codes only.

\textbf{Bit independence.} All hash functions in $H$ are mutually independent. This means that for any pair of distinct hash functions $h_i \ne h_j \in H$, and for any point $\mathbf{x} \in \mathbb{R}^d$, the outputs $h_i(\mathbf{x})$ and $h_j(\mathbf{x})$ are statistically independent. It has been proven to be indispensable for optimizing performance \cite{He2011CompactHW,heo2015spherical,JolyBuisson2011RMMH,SHnips08}. This independence ensures that each hash function contributes uniquely to the hashing process, thereby eliminating potential redundancy in the generated binary bits. When hash functions are not mutually independent, the resulting binary representations suffer from information redundancy, which in turn, dilutes the effectiveness of the binary hashing scheme, leading to poor retrieval efficiency. Such redundancy also inflates the storage requirements unnecessarily.

It is interesting to note that none of the existing hash functions (i.e., thresholding, hyperspheres and hyperplanes) have the three properties naturally. That is the reason why learning has been employed to ensure that the final hash functions satisfy the three properties.

Here we have the insight that Voronoi diagrams is a suitable candidate for hash functions because they have the following properties, without learning: 
\begin{enumerate}[1.]
    \item A Voronoi diagram naturally covers the entire input space.
    \item For a given dataset, every Voronoi cell in a Voronoi diagram covers approximately the same number of points. 
    This occurs when a set of points, that represents the data distribution, is used to construct the Vonoroi diagram.  
    And it can be easily achieved through a random Voronoi partition created by a set of  points drawn independently and randomly from the dataset \cite{Devroye2017OnTM}.
    \item It is feasible to transform a Voronoi diagram into a set of mutually independent hash functions (see the analysis in Section \textbf{4.3}).
\end{enumerate}

This insight has led us to propose a no-learning data-dependent hashing scheme based on Voronoi diagrams, described in the next section.

\section{Proposed Approach}

Here we give the formal definition of our proposed binary hashing method called Voronoi Diagram Encoded Hashing (VDeH), which is the first binary hashing scheme based on a data-dependent similarity using Voronoi diagram partitioning.

\subsection{A New Definition of Hashing}

 A kernel based on Voronoi diagrams called Isolation Kernel \cite{ting2018isolation} is defined as:
\[
\kappa(\mathbf{x},\mathbf{y} |\mathcal{H}(\mathcal{X}))  =  {\mathbb E}_{H \sim \mathcal{H}(\mathcal{X})} [\mathds{1}({\mathbf{x}},{\mathbf{y}} \in \theta\ | \ \theta \in H)],\]
\noindent where each Voronoi diagram $H$ has cells $\theta$, and $\mathds{1}(\cdot)$ is an indicator function.

By re-writing each cell $\theta$ as a hash function $h$, it can be re-expressed as:
\[
\kappa(\mathbf{x},\mathbf{y} |\mathcal{H}(\mathcal{X}))    =  {\mathbb E}_{H \sim \mathcal{H}(\mathcal{X})} [\mathds{1}(h(\mathbf{x}) = h(\mathbf{y}) = 1\ | \ h \in H)].
\]
The above revelation, together with the three properties of Voronoi diagram (stated in Section \ref{sec-insight}), prompt us to propose a new definition of hashing.
Given a dataset $\mathcal{X} \subset \mathbb{R}^{d\times N}$, the proposed VDeH scheme is defined as follows: 

\begin{definition}
    The VDeH scheme is a family $\mathcal{H}(\mathcal{X})$ of  hash functions  created by Voronoi diagrams associated with a distribution $P_\mathcal{H}$ over $\mathcal{H}(\mathcal{X})$ generated from a dataset $\mathcal{X}$  such that it satisfies
\begin{equation}
    Pr_{h\in H \in \mathcal{H}(\mathcal{X})}[h(\mathbf{x}) = h(\mathbf{y})=1] = \kappa(\mathbf{x},\mathbf{y} |\mathcal{H}(\mathcal{X})),
\label{eq:lshour}
\end{equation}
\noindent
where $\kappa(\mathbf{x},\mathbf{y} |\mathcal{H}(\mathcal{X}))$
is derived from Voronoi diagrams generated from $\mathcal{X}$; 
and $H$ is a set of all hash functions $h$ derived from a Voronoi diagram.
\label{def1}
\end{definition}

This definition makes the implementation of VDeH extremely simple because the hashing functions can be obtained with a simple additional encoding from the Voronoi diagrams already used in Isolation Kernel. Unlike existing hashing methods, VDeH needs no special design or learning of hash functions to gain the three properties mentioned in Section \ref{sec-insight}. 

\subsection{Implementation Details of VDeH}
\label{sec-implementation}

Given a subset $\mathcal{D} = \{\mathbf{s}_1,\dots,\mathbf{s}_\psi\} \subset \mathcal{X}$ with $\psi$ randomly selected points.
A Voronoi diagram $H$, created by $\mathcal{D}$, partitions the $\mathbb{R}^{d}$ space into $\psi$ Voronoi cells, where $\mathbf{s}_i$ is at the center of Voronoi cell $i$. Let a set of hash functions created by $\mathcal{D}$ be $H =  \{ h_1,h_2,\dots,h_{\psi} \}$. For any $\mathbf{x}\in \mathbb{R}^{d}$,  $h_i(\mathbf{x})$ is defined as:
\begin{equation}
    h_i(\mathbf{x})=
    \begin{cases}
    0 & \text{if} \quad dist(\mathbf{x},\mathbf{s}_i) > dist(\mathbf{x},\mathcal{D})\\
    1 & \text{if} \quad dist(\mathbf{x},\mathbf{s}_i) = dist(\mathbf{x},\mathcal{D}), 
    \end{cases}
\label{eqhash}
\end{equation}
where $dist(\mathbf{x},\mathbf{s}_i)= \parallel \mathbf{x}-\mathbf{s}_i \parallel$ denotes the $\ell_2$ distance between $\mathbf{x}$ and $\mathbf{s}_i$; and $dist(\mathbf{x},\mathcal{D}) = \min_{\mathbf{y} \in \mathcal{D}/\{\mathbf{x}\}}  \parallel \mathbf{x}-\mathbf{y} \parallel$. When a point is located on a boundary, it is randomly assigned to any one of the cells sharing that boundary.

Given a fixed length of binary bits,
existing studies have established that ensuring independence among generated binary bits is key to effective binary hashing. However, directly converting Voronoi cells into hash functions results in dependencies between them, due to the fact:

\begin{equation}
    \forall h\in H, \forall \mathbf{x}\in \mathbb{R}^{d}, \sum_{i\in [1,\psi]} \mathds{1}[h_i(\mathbf{x}) = 1] = 1.
\label{eq:character}
\end{equation}

This dependency arises because if $\mathbf{x}$ belongs to the $i$-th Voronoi cell, indicating $h_i(\mathbf{x})=1$, then $\mathbf{x}$ can not fall into other regions, hence $h(\mathbf{x})=0$ for those. As a result, any $h(\mathbf{x})$ is influenced by other hash functions.

To address this issue, we have adopted a simple encoded hashing mechanism to generate mutually independent binary bits. Let $\psi=2^w$, and as $2^w$ Voronoi cells can be encoded with a binary code of $w$ mutually independent bits, a $L$-bit code of the VDeH scheme represents $L/w \times {2^w}$ hash functions (the proof is presented in the Section \textbf{4.3}). 

The process for generating a binary code corresponding to a point $\mathbf{x} \in \mathbb{R}^d$ via VDeH is outlined as follows:

\begin{enumerate}[1.]
    \item Given a dataset $\mathcal{X}$, we randomly sample $\psi$ points to form a set $\mathcal{D}$, and generate the set of Voronoi diagram hash functions $H =  \{ h_1,h_2,\dots,h_{\psi} \}$. According to Eq.(\ref{eq:character}), there is  one and only one $h_i(\mathbf{x})=1$ for any point $\mathbf{x}$, and $\forall u \ne i, h_u(\mathbf{x})=0$.
    \item 
    The above `raw' $\psi$-bit vector $[h_1(\mathbf{x}),h_2(\mathbf{x}),\dots,h_\psi(\mathbf{x})]$ can then be encoded as $w$-bit vector $b(\mathbf{x}) = [e_1(\mathbf{x}),e_2(\mathbf{x}),\dots,e_w(\mathbf{x})]$, where $\psi \le 2^w$,  through the following encoded hashing function:
    \begin{equation}
        e_j(\mathbf{x}) = 
\begin{cases}
    0 & \text{if} \quad \lfloor\frac{\arg_{i\in[1,\psi]} h_i(\mathbf{x}) = 1}{2^{j-1}}\rfloor \mod 2 = 0 \\
    1 & \text{if} \quad \lfloor\frac{\arg_{i\in[1,\psi]} h_i(\mathbf{x}) = 1}{2^{j-1}}\rfloor \mod 2 \neq 0, \\
\end{cases} 
\label{eqhash2}
    \end{equation}
    where $j = 1,2,\dots,w$.
    \item After repeating the above two steps $L/w$ times, we concatenate all $w$-bit vectors $b(\mathbf{x})$ to formulate a $L$-bit code $\mathcal{B}(\mathbf{x})=[b^1(\mathbf{x}),\dots,b^{L/w}(\mathbf{x})]$ for any point $\mathbf{x}$.

\end{enumerate}

It is interesting to note that this simple encoding is not applicable to existing L2H methods where a $L$-bit code could represent $L$ hash functions only. In short, the VDeH scheme represents $(\frac{\psi}{w} -1)L$ more hash functions than existing L2H methods when both have codes of the same number of $L$ bits. This is a key to VDeH's better performance over existing L2H methods.
Also note that the Voronoi diagram does not necessarily have to have $2^w$ Voronoi cells. It is simply a convenience for a binary encoded hashing with mutually independent bits.

\textbf{The time complexity of  VDeH}. VDeH has a linear time complexity with respect to the size of the dataset, denoted as $N$, and the dimensionality of the data, denoted as $d$. A detailed analysis of the time complexity of VDeH is presented as follows: to convert a dataset $\mathcal{X}$ of $N$ points of $d$ dimensions, VDeH first finds the nearest neighbor of each point in $\mathcal{X}$ from the set $\mathcal{D}$ of $\psi$ points (or Voronoi cells), resulting in a complexity of $\mathcal{O}(\psi Nd)$. This process is repeated $L/w$ times, leading to a total complexity of $\mathcal{O}(\frac{\psi LNd}{w})$. VDeH then uses the encoded hash functions $h$ to generate $L$-bit binary codes for all $N$ points, resulting in a complexity of $\mathcal{O}(LN)$. Thus, the overall time complexity for VDeH's `training' process is $\mathcal{O}(\frac{\psi LNd}{w})$, which is linear to the dataset size $N$ and the input dimensionality $d$.

\subsection{Distance Measure for VDeH Codes}

Hamming distance, renowned for its simplicity and efficiency, is predominantly employed in the comparison of binary codes due to its methodological advantage of merely counting the differing bits between two codes, thereby circumventing the necessity for intricate arithmetic operations such as multiplication and square root calculations. This stands in stark contrast to the computationally more demanding Euclidean distance.

For VDeH, the measurement of similarity between any two points is conducted by calculating the probability that both points fall into the same regions across all generated Voronoi cells. 
To fully utilize the three properties of the Voronoi diagram hashing functions, we propose the following distance metric, VDeH distance ($\mathbf{d}_{V}(\mathcal{B}_i,\mathcal{B}_j)$), between two binary codes $\mathcal{B}_i$ and $\mathcal{B}_j$ generated by VDeH:
\begin{equation}
    \mathbf{d}_{V}(\mathcal{B}_i,\mathcal{B}_j) =\frac{w}{L} { \sum_{k=1}^{L/w} \mathds{1}(b_{i}^k \neq b_j^k)},
\label{eq:distance}
\end{equation}
where $\mathcal{B}_i=[b_i^1,\dots,b_i^{L/w}]$ and $\mathcal{B}_j=[b_j^1,\dots,b_j^{L/w}]$.
Then, for any given query point $\mathbf{q}$, after computing its binary code $\mathcal{B}(\mathbf{q})$ through VDeH and its distance to every point in the dataset can be calculated using Eq.(\ref{eq:distance}), the retrieval process returns a point in the dataset having the shortest distance to the query point $\mathbf{q}$.

\subsection{Independence among VDeH Bits}

The importance of independence among hash bits are mentioned in existing studies \cite{SHnips08,He2011CompactHW,JolyBuisson2011RMMH}. It can be defined as follows:

\begin{definition}
    (\textbf{Independence among hash bits.}) Given a family of hash functions $\{h_1,h_2,\dots,h_L\}$, for any $\mathbf{x} \in \mathbb{R}^d$,  $h_1(\mathbf{x})$, $h_2(\mathbf{x})$, $\dots$, $h_L(\mathbf{x})$ are said to be mutually independent for $\forall \mathbf{x}$ if for any set of values $v_1,v_2,\dots,v_L \in \{0,1\}$, it holds that:
    \begin{align*}
        Pr(h_{1}(\mathbf{x})=v_{1},h_{2}(\mathbf{x})=v_{2},\dots,h_{L}(\mathbf{x})=v_{L}) = \\
        Pr(h_{1}(\mathbf{x})=v_{1})\cdot Pr(h_{2}(\mathbf{x})=v_{2})\cdot \dots \cdot Pr(h_{L}(\mathbf{x})=v_{L}).
    \end{align*}
\end{definition}

This means that the value of one hash bit does not influence the value of another hash bit.
Although directly converting regions generated by Voronoi diagrams into hash functions results in dependencies between them based on the Eq.(\ref{eq:character}), we prove that the encoded hashing enables the achievement of such independence among hash bits. VDeH generates mutually independent binary bits through encoded hashing, as demonstrated by the following theorem.

\begin{lem} \label{lem2}

Let $\mathcal{D} = \{\mathbf{s}_1,\mathbf{s}_2,\dots,\mathbf{s}_{\psi}\} \thicksim G^{\psi}$ be a dataset, where every $\mathbf{s}_i$ is i.i.d drawn from an unknown probability distribution $G$ on the input space $\Omega$. Let $\mathcal{D}$ forms its Voronoi cells $V_i \subset \Omega$, the probability that $\mathbf{s}_i$ is the nearest neighbor of any $\mathbf{x}\in \Omega$ in $\mathcal{D}$ is given as: $Pr(\mathbf{x}\in V_i)=1/{\psi}$, for every $i=1,2,\dots,{\psi}$.

\end{lem}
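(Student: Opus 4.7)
The plan is to prove the claim via a symmetry (exchangeability) argument together with the tie-breaking convention already stated in Section~4.2. Fix an arbitrary $\mathbf{x}\in\Omega$ and consider the randomness coming solely from the i.i.d.\ sample $\mathcal{D}=\{\mathbf{s}_1,\dots,\mathbf{s}_\psi\}\sim G^\psi$. The event $\{\mathbf{x}\in V_i\}$ is, by the definition of a Voronoi cell, equivalent to $\{\mathbf{s}_i=\arg\min_{j\in[1,\psi]}\mathrm{dist}(\mathbf{x},\mathbf{s}_j)\}$, i.e., $\mathbf{s}_i$ is the nearest neighbor of $\mathbf{x}$ in $\mathcal{D}$.

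First I would observe that the distances $D_j:=\mathrm{dist}(\mathbf{x},\mathbf{s}_j)$, $j=1,\dots,\psi$, are i.i.d.\ because the $\mathbf{s}_j$ are i.i.d.\ from $G$ and $\mathbf{x}$ is fixed. Hence the joint distribution of $(D_1,\dots,D_\psi)$ is invariant under any permutation of the indices, i.e., the $D_j$ are exchangeable. Let $A_i=\{\mathbf{s}_i\text{ is the nearest neighbor of }\mathbf{x}\text{ in }\mathcal{D}\}$. Exchangeability immediately yields $\Pr(A_1)=\Pr(A_2)=\dots=\Pr(A_\psi)$.

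Second, I would show that the events $A_1,\dots,A_\psi$ partition the sample space (up to a null set). They are mutually exclusive whenever there is a unique minimizer, and every sample admits at least one minimizer so $\bigcup_i A_i=\Omega^\psi$. If ties arise on a set of positive probability (possible if $G$ has atoms or lies on a lower-dimensional set), the paper's convention of assigning the boundary point uniformly at random among the tied cells restores exact mutual exclusivity and preserves the symmetry between indices; the randomization is independent of the labeling and therefore does not break exchangeability. In the generic continuous case, ties have probability zero and the partition is exact without further work.

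Combining the two facts, $1=\Pr\!\bigl(\bigcup_{i=1}^{\psi}A_i\bigr)=\sum_{i=1}^{\psi}\Pr(A_i)=\psi\cdot\Pr(A_1)$, hence $\Pr(\mathbf{x}\in V_i)=\Pr(A_i)=1/\psi$ for every $i$. The main subtlety I anticipate is the tie-breaking step: one must argue carefully that the auxiliary randomization in Section~4.2 respects the index-symmetry of the model, so that exchangeability of the $A_i$ is maintained even when $G$ is not continuous; everything else follows from a one-line symmetry-plus-partition computation.
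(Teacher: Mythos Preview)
Your proof is correct but follows a genuinely different route from the paper's. The paper argues analytically: it conditions on the distance $r$ from $\mathbf{x}$ to its nearest neighbour, writes the probability that all $\psi$ points miss the $r$-ball $R(\mathbf{x})$ while the designated point $\mathbf{s}_i$ alone lands in an infinitesimal shell $\Delta R(\mathbf{x})$, and then integrates over the ball's probability content $f=\int_{R(\mathbf{x})}\rho_G(\mathbf{y})\,d\mathbf{y}$ from $0$ to $1$, obtaining $\int_0^1 (1-f)^{\psi-1}\,df=1/\psi$. Your argument is a pure symmetry/exchangeability one: the distances $D_j$ are i.i.d., so every index is equally likely to achieve the minimum, and the events $A_i$ partition the sample space. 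Your approach is shorter and more robust---it handles atomic or otherwise non-continuous $G$ cleanly via the uniform random tie-breaking rule, whereas the paper's derivation presupposes a density $\rho_G$ and tacitly relies on $f$ varying continuously in $r$. On the other hand, the paper's computation makes the nearest-neighbour distance distribution explicit, which could be reused if one later needed finer information than the marginal probability $1/\psi$.
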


\begin{proof}

Let $R({\mathbf{x}})=\{{\mathbf{y}} \in \Omega \; | \; dist({\mathbf{x}},{\mathbf{y}}) \leq r \}$ be $r$-neighborhood region of $\mathbf{x}$, and $\Delta R({\mathbf{x}})=\{{\mathbf{y}} \in \Omega \; | \; r \leq dist({\mathbf{x}},{\mathbf{y}}) \leq r+ \Delta r\}$ be $\Delta r$ incremental $r$-neighborhood region of $x$. Let $\rho_G$ be the probability density of $G$, for $\Delta r > 0$, $f = \int_{R({\mathbf{x}})} \rho_G({\mathbf{y}})d{\mathbf{y}}$ and $\Delta f = \int_{\Delta R({\mathbf{x}})} \rho_G({\mathbf{y}})d{\mathbf{y}}$. 

Then, let two events $T$ and $Q$ be as follows:
\begin{eqnarray}
T &\equiv& \mbox{$\mathbf{s}_k \notin R({\mathbf{x}})$ for all $\mathbf{s}_k \in \mathcal{D}$ ($k=1,2,\dots,{\psi}$)},  \mbox{ and}\nonumber\\ 
Q &\equiv& \left\{ \begin{array}{l} 
\mbox{$\mathbf{s}_i \in \Delta R({\mathbf{x}})$ for $\mathbf{s}_i \in \mathcal{D}$},\mbox{ and}\\
\mbox{$\mathbf{s}_k \notin \Delta R({\mathbf{x}})$ for all $\mathbf{s}_k \in \mathcal{D} \; (k=1,2,\dots,{\psi}, i \neq k)$.}
\end{array} \right. \nonumber
\end{eqnarray}

Next, the probability $Pr(T \land Q)=Pr(T)Pr(Q \;|\; T)$ denotes that $\mathbf{s}_i$
is the nearest neighbor of ${\mathbf{x}}$ in $\mathcal{D}$, {\it i.e.}, ${\mathbf{x}} \in V_i$, where
\begin{eqnarray}
Pr(T) &=& (1-f)^{\psi}, \mbox{ and}\nonumber\\
Pr(Q \;|\; T) &= & \frac{\Delta f}{1-f} \left\{1-\frac{\Delta f}{1-f} \right\}^{{\psi}-1}.\nonumber
\end{eqnarray}
By letting $\Delta f$ be infinite decimal $df$, $\Delta f/(1-f) \rightarrow df/(1-f)$ 
and $\{1-\Delta f/(1-f)\} \rightarrow 1$. Thus, we obtain the following 
total probability that $\mathbf{s}_i$ is the nearest neighbor of ${\mathbf{x}}$ in $\mathcal{D}$, {\it i.e.}, ${\mathbf{x}} \in V_i$, by integrating $Pr(T \land Q)$ on $f \in [0,1]$ for every $i=1,2,\dots,{\psi}$.
\[
Pr({\mathbf{x}} \in V_i) = \int_0^1 (1-f)^{\psi} \cfrac{df}{1-f} = \cfrac{1}{{\psi}}\ .
\] 
\end{proof}

\begin{thm} \label{thm3}

Let a $L$-bit code vector $\mathcal{B}(\mathbf{x}) = [b_1(\mathbf{x}),b_2(\mathbf{x}),\dots,$ $b_T(\mathbf{x})]$ denote the VDeH encoded binary vector for any point $\mathbf{x}$, where $T = L/w$ is the number of Voronoi diagrams generated and every $b_i(\mathbf{x})$ is the $w$-bit vector generated by the $i$-th VDeH encoded hashing. When each Voronoi diagram has $2^w$ regions, every bit in $\mathcal{B}(\mathbf{x})$ is mutually independent.

\end{thm}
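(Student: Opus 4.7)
The plan is to decompose the claim into two levels: (i) within a single Voronoi diagram, the $w$ encoded bits produced by Eq.(\ref{eqhash2}) are mutually independent; and (ii) across the $T=L/w$ independently sampled Voronoi diagrams, the resulting $w$-bit blocks $b^1(\mathbf{x}),\dots,b^T(\mathbf{x})$ are mutually independent. Combining (i) and (ii) yields independence of all $L$ bits of $\mathcal{B}(\mathbf{x})$.

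For step (i), I would fix an arbitrary $\mathbf{x}\in\mathbb{R}^d$ and a single Voronoi diagram built from $\mathcal{D}=\{\mathbf{s}_1,\dots,\mathbf{s}_{2^w}\}$ drawn i.i.d.\ from $G$. Let $I(\mathbf{x})\in\{1,\dots,2^w\}$ be the unique cell index for which $h_{I(\mathbf{x})}(\mathbf{x})=1$ (existence and uniqueness follow from Eq.(\ref{eq:character})). By Lemma~\ref{lem2}, $\Pr(I(\mathbf{x})=i)=1/2^w$ for every $i$, so $I(\mathbf{x})$ is uniformly distributed on a set of size $2^w$. Equation~(\ref{eqhash2}) defines $e_j(\mathbf{x})$ as exactly the $j$-th bit in the standard binary expansion of $I(\mathbf{x})-1$. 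The map from $\{0,1,\dots,2^w-1\}$ to $\{0,1\}^w$ given by the binary expansion is a bijection, so under the uniform distribution on indices, the pushforward on $\{0,1\}^w$ is also uniform. Hence for every $(v_1,\dots,v_w)\in\{0,1\}^w$,
\begin{equation*}
\Pr(e_1(\mathbf{x})=v_1,\dots,e_w(\mathbf{x})=v_w)=\tfrac{1}{2^w}=\prod_{j=1}^{w}\tfrac{1}{2}=\prod_{j=1}^{w}\Pr(e_j(\mathbf{x})=v_j),
\end{equation*}
which is exactly the independence condition from the definition.

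For step (ii), note that the construction in Section~\ref{sec-implementation} repeats the random sampling of $\mathcal{D}$ from $\mathcal{X}$ independently $T=L/w$ times to build $b^1(\mathbf{x}),\dots,b^T(\mathbf{x})$. Because each $b^t(\mathbf{x})$ is a deterministic function of its own sample $\mathcal{D}^{(t)}$, and the $\mathcal{D}^{(t)}$ are mutually independent, the blocks $b^t(\mathbf{x})$ are mutually independent. Combining this block-level independence with the within-block independence from step (i) gives, for any $(v_1,\dots,v_L)\in\{0,1\}^L$, the factorization of the joint probability into the product of marginals, establishing mutual independence of all $L$ bits of $\mathcal{B}(\mathbf{x})$.

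The main obstacle is step (i): carefully tying Eq.(\ref{eqhash2}) to the binary-expansion bijection and then arguing that a uniform law on $\{0,\dots,2^w-1\}$ factorizes over its bits. The across-diagrams part (ii) is essentially bookkeeping about independent samples. One subtlety worth flagging is the boundary case in Eq.(\ref{eqhash}), where a point lying on a cell boundary is randomly assigned; since this is a probability-zero event when $G$ is continuous, and the random tie-breaking does not bias the uniform law of $I(\mathbf{x})$, Lemma~\ref{lem2} still applies and the argument is unaffected.
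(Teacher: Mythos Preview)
Your proposal is correct and uses the same overall architecture as the paper: independence across blocks follows from the independent sampling of the $T$ subsets $\mathcal{D}^{(t)}$, and independence within a block follows from Lemma~\ref{lem2} together with the fact that $\psi=2^w$. The paper organizes the argument as a case split over an arbitrary pair of bits $\alpha,\beta$: in its Case~2 (same block) it counts that, over the $2^w$ possible cell indices, exactly $\psi/2$ satisfy $\alpha=v_\alpha$ and exactly $\psi/4$ satisfy $\alpha=v_\alpha,\beta=v_\beta$, and then invokes Lemma~\ref{lem2} to turn these counts into probabilities $1/2$ and $1/4$. Your route instead identifies Eq.~(\ref{eqhash2}) with the binary expansion bijection $\{0,\dots,2^w-1\}\to\{0,1\}^w$ and pushes the uniform law through it, obtaining the full joint factorization over all $w$ bits in one stroke. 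This buys you something the paper's argument technically does not: the paper only verifies \emph{pairwise} independence of bits, which in general does not imply mutual independence, whereas your bijection argument yields mutual independence directly and thus matches the statement of the theorem more precisely.
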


\begin{proof}
    For any two bits $\alpha$, $\beta \in \mathcal{B}(\mathbf{x})$ located at 
 different Voronoi cells, there are the following two cases: \\
    \textbf{Case 1}: $\alpha$ and $\beta$ are generated by the different encoded hashing. \\
    Let $\alpha\in b_{j}$ and $\beta \in b_{k}$  ($1 \leq j < k \leq T$). Given that Voronoi diagrams generated in each random sampling are mutually independent, the bits obtained from each encoded hashing are also mutually independent. Consequently, it follows that:
    \begin{equation*}
        Pr(b_{j}=s_{j},b_{k}=s_{k}) = Pr(b_{j}=s_{j})\cdot Pr(b_{k}=s_{k}),
    \end{equation*}
    where $s_{j},s_{k} \in \{0,1\}^w$. Moreover, for the bits $\alpha\in b_{j}$ and $\beta \in b_{k}$  
    \begin{equation*}
        Pr(\alpha=v_{\alpha},\beta=v_{\beta}) = Pr(\alpha=v_{\alpha})\cdot Pr(\beta=v_{\beta}),
    \end{equation*}
    where $v_{\alpha},v_{\beta} \in \{0,1\}$. \\
    \textbf{Case 2}: $\alpha$ and $\beta$ are generated by the same encoded hashing. \\
    Let $\alpha,\beta \in b_i$ $(1 \leq i \leq T), b_i = [e_1,e_2,\dots,e_w]$, and $\mathcal{B}$ denote a set of $\psi$ different $w$-bit binary codes correspond to the $\psi$ distinct Voronoi cells. \\
    When $\psi = 2^w$, $\mathcal{B}$ exhaustively represents all probable $w$-bit binary codes corresponding to $\psi$ Voronoi cells. Let $v_{\alpha},v_{\beta} \in \{0,1\}$, then we have $\sum \mathds{1}[\alpha=v_{\alpha}] = \frac{\psi}{2}$, and $\sum \mathds{1}[\alpha=v_{\alpha},\beta=v_{\beta}] = \frac{\psi}{4}$.
    Based on Lemma \ref{lem2},  it follows that
    \begin{align*}
         Pr(\alpha=v_{\alpha},\beta=v_{\beta}) & = \textstyle\sum_{i=1}^{\psi}(Pr(\mathbf{x}\in V_i)\mathds{1}[\alpha=v_{\alpha},\beta=v_{\beta}]) \\
        & = \textstyle\frac{1}{\psi}\sum_{i=1}^{\psi}\mathds{1}[\alpha=v_{\alpha},\beta=v_{\beta}] \\
        & = \textstyle \frac{\sum_{i=1}^{\psi}\mathds{1}[\alpha=v_{\alpha}]}{\psi} \times \frac{\sum_{i=1}^{\psi}\mathds{1}[\beta=v_{\beta}]}{\psi} \\
        & = \textstyle\sum_{i=1}^{\psi}(Pr(\mathbf{x}\in V_i)\mathds{1}[\alpha=v_{\alpha}]) \times \sum_{i=1}^{\psi}(Pr(\mathbf{x}\in V_i)\mathds{1}[\beta=v_{\beta}]) \\
        & = Pr(\alpha=v_{\alpha})\cdot Pr(\beta=v_{\beta}).
    \end{align*}
    Consequently, in all cases, $\alpha$ and $\beta$ are mutually independent. Since $\alpha$ and $\beta$ are arbitrary bits in $\mathcal{B}$, when each Voronoi diagram has $2^w$ regions, every bit in $\mathcal{B}(\mathbf{x})$ generated by VDeH is mutually independent.
\end{proof}

\section{Experiment}

\subsection{Datasets and Settings}

To evaluate the proposed Voronoi diagram encoded hashing, we conducted experiments on four public datasets, including two image datasets and two text datasets. It is important to note that existing studies mainly evaluate their performance on image datasets~\cite{BPH13,ITQ13,heo2015spherical,rcLSH23,SHnips08,CBE14,SP15}, our work extends the evaluation to text datasets for a more comprehensive analysis. Unlike images, which have
obviously discriminating features and can be easily identified to guide retrieval tasks, text data presents a greater challenge due to its inherent complexity. The datasets vary in size and dimensionality: CIFAR-10~\cite{cifar10} contains 60,000 images (512 dimensions), GIST~\cite{Gist} contains one million images represented by 960-dimensional descriptors, Nytimes~\cite{nytimes} includes 290,000 articles (256 dimensions), and Kosarak~\cite{kosarak} includes 74,962 click-stream news (27,983 dimensions).

Following existing studies~\cite{heo2015spherical,DSH14,rcLSH23}, we use 10,000 randomly sampled instances for training. We then randomly sample 500 instances, different from the training set as queries.
The retrieval performance is assessed using two frequently used evaluation metrics, i.e., mean average precision (mAP) and the precision-recall curve (PR curve). We compared the performance of VDeH with the three types of state-of-the-art methods, i.e., (1) thresholding-based methods: spectral hashing (SH), circulant binary embedding (CBE), iterative quantization (ITQ), bilinear projections (BP), and sparse projections (SP); (2) hyperspheres-based method: spherical hashing (SpH); (3) hyperplanes-based methods: density sensitive hashing (DSH), sparse embedding and least variance encoding (SELVE), and refining codes for locality sensitive hashing (rcLSH). The parameters within each hashing method were assigned to default or suggested values by authors. For VDeH, the parameter $\psi$ is searched in $\{2^2,2^3,\dots,2^8\}$. All experiments are executed on a Linux CPU machine: AMD 128-core CPU with each core running at 2 GHz and 1T GB RAM.

\subsection{Results and discussion}

\textbf{Comparing to the state-of-the-art.}
Table~\ref{tab:mAP_main} presents the mAP scores of our proposed VDeH and the competing hashing methods conducted on the benchmark datasets.  VDeH has the best results, compared with all the state-of-the-art methods, across all the tested number of hash bits ranging from 128 bits to 2048 bits. The mAP of VDeH increases consistently as the number of hash bits increases. 
Overall, only DSH demonstrated comparable performance across all datasets, positioning it as VDeH’s closest contender. Among existing methods, rcLSH exhibited the best performance on image datasets. Notably, many existing methods showed a marked decrease in performance on the text datasets as the hash bits increases, except VDeH, SH, and DSH.
This is probably because text data have complex density variations. DSH and SH are adapted by using more hash functions in dense distribution and fewer in sparse distribution; while VDeH naturally adapts because it creates smaller Voronoi cells in dense distribution and larger ones in sparse distribution~\cite{Devroye2017OnTM,ting2018isolation}.
In addition, Figure~\ref{fig:PRcurves} provides the PR curves of VDeH and the competing hashing methods, when the number of hash bits is 512. We can observe that the VDeH curve (in red) consistently dominates the curves of other methods across all datasets, demonstrating its superior precision and recall scores. For the cases with different hash bits, similar results can be observed (not presented due to lack of space).

\begin{table*}[t!]
\centering
\caption{The mAP scores on four datasets with different number of hash bits. The highest score in each row is marked with bold font.}
\scalebox{0.82}{
\begin{tabular}{c|c|cccccccccc}
\hline
      \textbf{Dataset}               & \textbf{  bits  }   & \textbf{  
 VDeH  }             & \textbf{rcLSH}    & \textbf{  SH  }          & \textbf{  CBE  }      & \textbf{  ITQ  }      & \textbf{  \, BP \,  }     & \textbf{  SP  }      & \textbf{  DSH  }     & \textbf{SELVE}   & \textbf{  SpH  }  \\ \hline
                                     & 128    & {\textbf{0.366}}  & 0.283     & 0.117       & 0.315     & {0.355}     & 0.344    & 0.352    & 0.318    & 0.199    & 0.242          \\
                                     & 256    & {\textbf{0.438}}  & 0.351     & 0.153       & 0.342     & 0.377     & 0.343    & 0.370    & 0.372    & 0.176    & 0.289          \\
                                     & 512    & {\textbf{0.468}}  & 0.392     & 0.193       & 0.396     & 0.389     & 0.339    & 0.397    & 0.401    & 0.167    & 0.304          \\
                                     & 1024   & {\textbf{0.501}}  & 0.412     & 0.201       & 0.401     & 0.398     & 0.332    & 0.407    & 0.429    & 0.169    & 0.314          \\
\multirow{-5}{*}{\rotatebox{90}{ \textbf{CIFAR-10} }}       & 2048   & {\textbf{0.525}}  & 0.432     & 0.214       & 0.407     & 0.401     & 0.335    & 0.426    & 0.457    & 0.169    & 0.321          \\ \hline
                                     & 128    & {\textbf{0.664}}  & 0.582     & 0.338       & 0.617     & 0.644     & 0.645    & 0.604    & 0.501    & 0.430    & 0.222          \\
                                     & 256    & {\textbf{0.714}}  & 0.631     & 0.454       & 0.644     & 0.650     & 0.641    & 0.614    & 0.580    & 0.399    & 0.223          \\
                                     & 512    & {\textbf{0.763}}  & 0.651     & 0.501       & 0.656     & 0.654     & 0.634    & 0.638    & 0.618    & 0.402    & 0.232          \\
                                     & 1024   & {\textbf{0.774}}  & 0.659     & 0.514       & 0.657     & 0.657     & 0.642    & 0.638    & 0.610    & 0.399    & 0.233          \\
\multirow{-5}{*}{\rotatebox{90}{\textbf{GIST}}}               & 2048   & {\textbf{0.793}}  & 0.687     & 0.532       & 0.662     & 0.663     & 0.644    & 0.641    & 0.621    & 0.398    & 0.242          \\ \hline
                                     & 128    & {\textbf{0.116}}  & 0.017     & 0.033       & 0.022     & 0.023     & 0.023    & 0.022    & 0.016    & 0.003    & 0.100          \\
                                     & 256    & {\textbf{0.165}}  & 0.026     & 0.055       & 0.029     & 0.029     & 0.029    & 0.027    & 0.033    & 0.004    & 0.110          \\
                                     & 512    & {\textbf{0.340}}  & 0.028     & 0.160       & 0.026     & 0.030     & 0.026    & 0.029    & 0.185    & 0.004    & 0.146          \\
                                     & 1024   & {\textbf{0.348}}  & 0.103     & 0.295       & 0.108     & 0.104     & 0.008    & 0.106    & 0.339    & 0.006    & 0.088           \\
\multirow{-5}{*}{\rotatebox{90}{\textbf{Nytimes}}}            & 2048   & \textbf{0.376}    & 0.113     & 0.310       & 0.100     & 0.097     & 0.008    & 0.100    & 0.326    & 0.008    & 0.089                        \\ \hline
                                     & 128    & {\textbf{0.457}}  & 0.235     & 0.375       & 0.255     & 0.261     & 0.254    & 0.258    & 0.368    & 0.203    & 0.256          \\
                                     & 256    & {\textbf{0.603}}  & 0.271     & 0.512       & 0.286     & 0.293     & 0.295    & 0.304    & 0.494    & 0.226    & 0.296          \\
                                     & 512    & {\textbf{0.607}}  & 0.308     & 0.571       & 0.301     & 0.298     & 0.346    & 0.341    & 0.437    & 0.279    & 0.346          \\
                                     & 1024   & \textbf{0.615}    & 0.307     & 0.569       & 0.300     & 0.302     & 0.351    & 0.357    & 0.415    & 0.289    & 0.372                       \\
\multirow{-5}{*}{\rotatebox{90}{\textbf{Kosarak}}}            & 2048   & \textbf{0.638}    & 0.313     & 0.579       & 0.309     & 0.312     & 0.359    & 0.369    & 0.412    & 0.300    & 0.381                       \\ \hline
\end{tabular}}
\label{tab:mAP_main}
\end{table*}

\begin{figure*}[t!]
    \centering
    \includegraphics[width=0.96\textwidth]{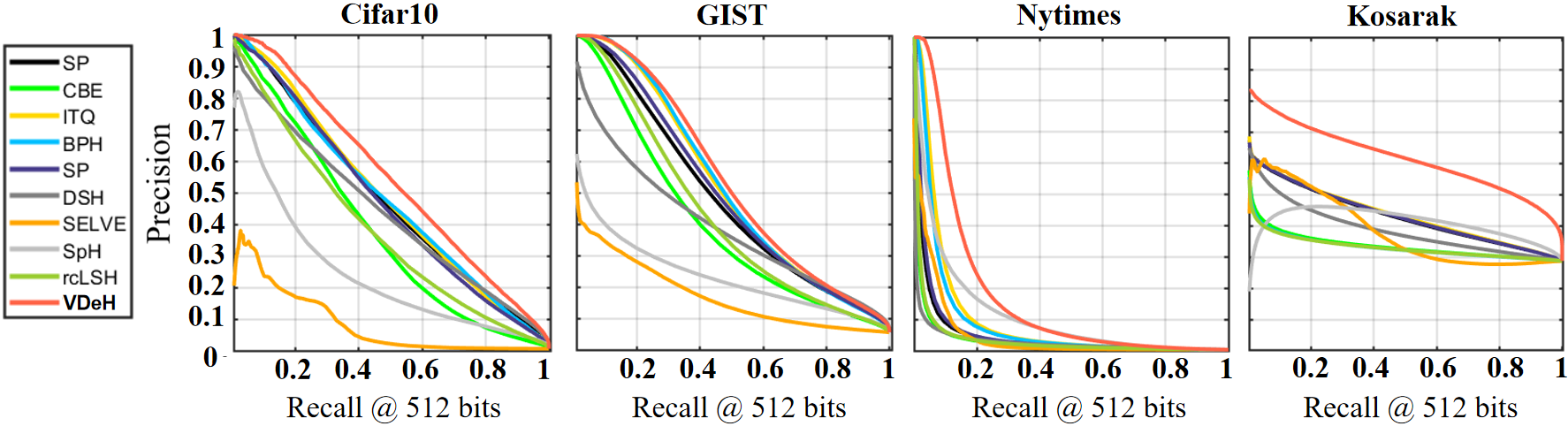}
    \caption{PR curves on four datasets with the 512-bit binary codes}
    \label{fig:PRcurves}
\end{figure*}

\begin{figure}[t!]
    \centering
    \begin{subfigure}[b]{0.36\textwidth} 
        \centering
        \includegraphics[width=1\textwidth]{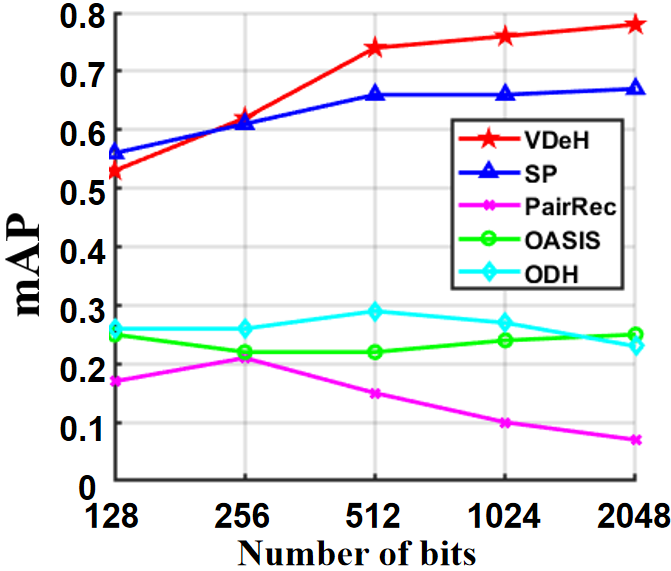} 
        \caption{Cifar10} 
        \label{fig:subfig-a} 
    \end{subfigure}
    \hspace{0.05\textwidth} 
    \begin{subfigure}[b]{0.36\textwidth} 
        \centering
        \includegraphics[width=1\textwidth]{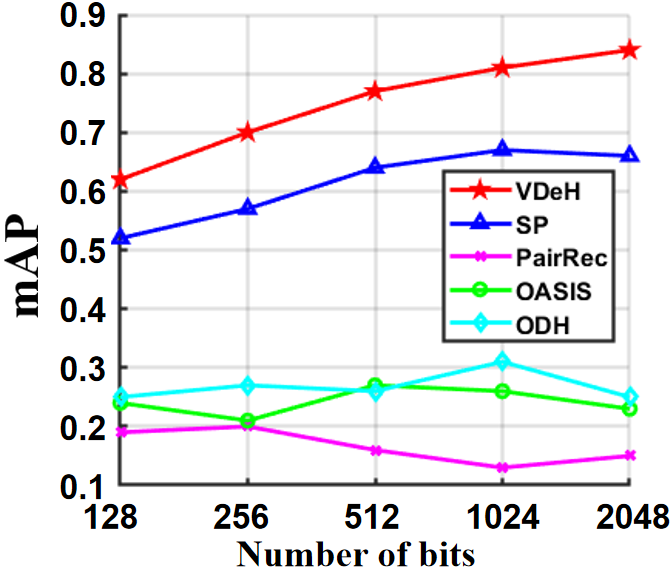} 
        \caption{GIST} 
        \label{fig:subfig-b} 
    \end{subfigure}
    \caption{The mAP scores comparison for five methods on image datasets.}
    \label{learning_com}
\end{figure}

\noindent\textbf{Comparing to the deep hashing methods.} Deep hashing is known for its high accuracy in semantic-based image retrieval~\cite{luo2023survey}. However, deep hashing methods cannot accurately retrieve input distance similarities, even after extracting semantic information from images. To show this, we use CIFAR-$10$ and GIST datasets, which possess semantic information in images. Initially, we utilized ResNet18 to extract embeddings from these two datasets. Since the Euclidean distance between the obtained embeddings can reflect the semantic information between images, we used the Euclidean distances derived from these embeddings as ground truth to test the 
performance of VDeH and SP, as well as three deep hashing methods PairRec~\cite{Hansen2020UnsupervisedSHDeep}, OASIS~\cite{Wu2022OnlineEnhancedDeep} and ODH~\cite{he2024one}. The comparison results are shown in Figure~\ref{learning_com}. We can observe that typical L2H methods can effectively preserve the Euclidean distances between instances, whereas deep methods lack this capability. This is due to the fact that deep methods incorporate labels as part of their loss function during training, aiming to map instances with the same label to close hash codes, rather than being designed to preserve the input distance similarities between instances. 

\begin{figure}[t!]
    \centering
    \includegraphics[width=0.66\textwidth]{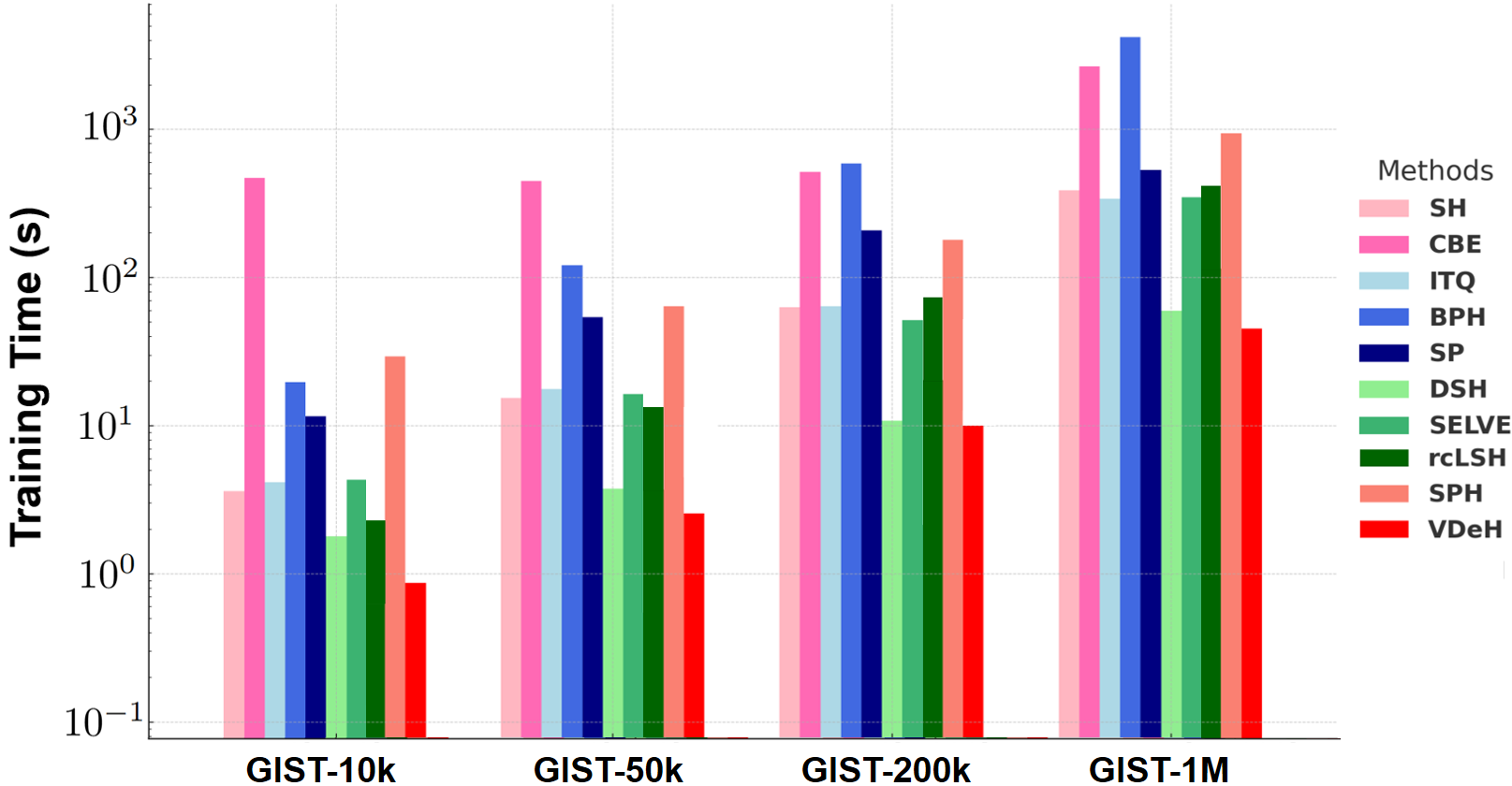}
    \caption{Training time comparison on GIST dataset with the 512-bit binary codes. 
    }
    \label{fig:time_g}
\end{figure}

\noindent\textbf{Training time comparison.} Unlike existing hashing methods that rely on optimization for learning hash functions, VDeH's training process is straightforward, requiring only the random sampling of a given number of points from the training data, each corresponding to a nearest-neighbor-based hash function (as shown in Eq.~\ref{eqhash}). Note that all methods require the transformation of training data into binary codes via hash functions. We tested the training time of various methods on the GIST dataset, shown in Figure~\ref{fig:time_g}. VDeH took the shortest time for every training data size, from 10k to 1M, demonstrating its efficiency superiority over other methods. This is because VDeH needs no learning, but all other existing methods must perform learning.

\begin{figure}[t!]
    \centering
    \begin{subfigure}[b]{0.36\textwidth} 
        \centering
        \includegraphics[width=1\textwidth]{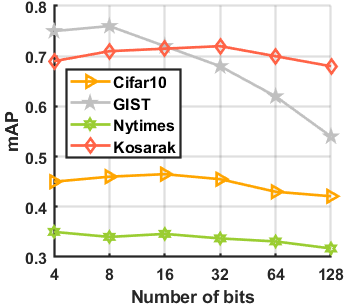} 
        \caption{512-bit binary codes} 
        \label{fig:subfig-a} 
    \end{subfigure}
    \hspace{0.05\textwidth} 
    \begin{subfigure}[b]{0.36\textwidth} 
        \centering
        \includegraphics[width=1\textwidth]{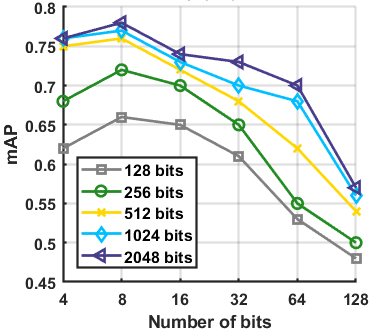} 
        \caption{GIST dataset} 
        \label{fig:subfig-b} 
    \end{subfigure}
    \caption{The impact of Parameter $\psi$: (a) testing on different datasets with 512-bit binary codes; (b) Testing on GIST dataset with different number of hash bits.}
    \label{para_com}
\end{figure}

\noindent\textbf{Effects of hyper-parameter $\psi$.}
Figure~\ref{para_com} illustrates the impact of the parameter $\psi$ on the performance of the proposed VDeH method across different datasets and binary code lengths. In subfigure (a), we observe that only the GIST dataset shows a degree of sensitivity to $\psi$, while Cifar10, Nytimes, and Kosarak show relatively stable mAP scores across the different $\psi$ values. Subfigure (b) further explores this by examining VDeH's mAP scores on the GIST dataset with varying binary code lengths. We can observe that the optimal $\psi$ value remains consistent regardless of the number of hash bits, ranging from 128 to 2048 bits. This suggests that the optimal selection of $\psi$ is intrinsic to the characteristics of the specific dataset, rather than the code length.

\section{Discussion}

We acknowledge that the integration of Voronoi diagrams with hashing techniques is not an entirely novel concept~\cite{ajani2013efficient,loi2013vlsh}. However, our proposed VDeH distinguishes itself from existing works through its unique mechanism for generating data-dependent hash functions and its redefinition of the hashing process. 
Prior approaches, such as VLSH by Loi \textit{et al.}~\cite{loi2013vlsh}, use Voronoi regions to primarily localize and adapt standard Locality Sensitive Hashing (LSH); its core hashing still relies on LSH's random projection-based mechanism within these regions. Similarly, the work by Ajani \& Wanjari combines Voronoi clustering with hash indexing, where hash indexing serves to optimize their $k$-means clustering algorithm for uncertain data rather than generating binary codes for similarity search. 
In contrast, VDeH directly derives hash functions from the partitions of Voronoi diagrams themselves. The goal of VDeH is to generate highly efficient binary representations for large-scale datasets, thereby enabling rapid similarity search and retrieval.
The main novelty of VDeH lies in its utilization of Voronoi diagrams as the core data-dependent hash function generator. It proposes a method that can achieve key hashing properties, including full space coverage, entropy maximization, and bit independence, without resorting to complex learning procedures, notably through its specifically defined encoded hashing mechanism.

\section{Conclusion} 

We introduce VDeH, a novel no-learning data-dependent method for binary hashing. VDeH distinguishes itself from L2H methods in three aspects. First, VDeH employs the unique space partitioning of Voronoi diagrams, leveraging its three previously concealed properties that match perfectly those required for hashing. Second, VDeH is an implementation of a new definition of hashing which equates the probability of some hashed condition to a similarity function. The definition enables Voronoi diagrams, already used to compute the similarity of Isolation Kernel, to construct hash functions easily without much effort. No existing L2H methods have used a similar definition as far as we know. Third, the integration of encoded hashing enables VDeH to generate mutually independent binary bits, which could not be utilized by existing methods because they have already employed independent hash functions.
Our experiments show that VDeH exhibits superior performance with lower computational cost compared to the state-of-the-art methods.

\begin{credits}
\subsubsection{\ackname} We thank the anonymous reviewers for their valuable comments. This work was supported in part by the National Natural Science Foundation of China (Grant No. 92470116).
\end{credits}

\end{document}